\documentclass[11pt,a4]{article}
\usepackage{asaf}
 
\title{Towards Lower Bounds for Geometric Spanners in High Dimension} 
\date{}
\author{Robert Krauthgamer%
  \thanks{The Harry Weinrebe Professorial Chair of Computer Science.
    This research was supported by the Israel Science Foundation grant \#1336/23.
  Email: \texttt{robert.krauthgamer@weizmann.ac.il}
  } 
  \qquad
  Nir Petruschka\thanks{
    Email: \texttt{nir.petruschka@weizmann.ac.il}
  } 
\\  Weizmann Institute of Science
}

\begin{document}

\maketitle
\begin{abstract}
We study the stretch--size tradeoff for geometric spanners in high-dimensional $\ell_p$ spaces.
Our main contribution is a simple proof of a lower bound shown by Har-Peled, Indyk, and Sidiropoulos [SODA 2013]: 
Every $2$-hop $t$-spanner of the pointset $\{0,1\}^d$ under $\ell_2$ norm
has at least $(2^d)^{1+\Omega(1/t^2)}$ edges. 
Our proof further extends this result to spanners with Steiner vertices. 
In addition, we establish a connection between bounded-hop spanners and general spanners, as follows. 
If every subset $Y$ of an $n$-point metric has a $t$-spanner with at most $\mu|Y|$ edges, 
then the metric has an $O(t)$-hop $O(t)$-spanner of size $O(n(\mu+\log n))$.
Consequently, hop-restricted spanner lower bounds for a metric 
imply lower bounds without hop restriction for one of its subsets.
\end{abstract}

\section{Introduction}

A spanner for a graph $G$ is a subgraph $H \subseteq G$ whose shortest-path distances approximate those of $G$. This notion dates back to early work on network synchronization~\cite{PU89a,PS89}, and since then, spanners have found numerous applications; see the surveys~\cite{Epp00,Zwick01,ahmed2020graph}.
The main measure of spanner quality is the tradeoff between approximation quality and the number of edges. 
A foundational result states that, for every integer $k\geq1$, every $n$-vertex graph admits a spanner with $O(n^{1+1/k})$ edges that approximates its distances within a factor of $2k-1$~\cite{ADDJS93}.

We study geometric spanners, in which the underlying graph is induced by a finite metric space $(X,d)$. Namely, it is the complete weighted graph on $X$, where each edge $\{u,v\}$ has weight equal to the metric distance $d(u,v)$ between its endpoints.
Specifically, we study finite pointsets $X \subset \mathbb{R}^d$, $|X|=n$, 
where the underlying metric is induced by the $\ell_p$ norm for $1 \leq p \leq \infty$.
Formally, we denote by $\ell_p^d := (\mathbb R^d, \|\cdot\|_p)$
and define a \emph{geometric spanner} of stretch $t$ (abbreviated \emph{$t$-spanner})
of a finite set $X \subseteq \ell_p^d$  
to be a graph $G=(X,E)$ in which every edge $\{x,y\}$ has weight $\|x-y\|_p$ and
\[
    \forall x,y \in X, \qquad  \|x-y\|_p\leq d_G(x,y)\leq t\|x-y\|_p. \label{eq:spanner}
\]

Geometric spanners for finite pointsets in $\ell_p^d$ have been studied extensively, especially in low dimension $d=O(1)$.
In this case, for every $0<\varepsilon\leq1$, every $n$-point subset of
$\ell_p^d$ has a $(1+\varepsilon)$-spanner with $n\varepsilon^{-O(d)}$ edges, and analogous bounds hold more generally in metrics of bounded doubling dimension~\cite{HM06}.
Low-dimensional spanners are relatively well understood, 
and a long line of work has also optimized other, usually secondary, quality measures, 
such as degree, lightness, and the number of hops, 
often simultaneously~\cite{CG06b,CLNS15,ES15,Gottlieb15,BLW19,LS23}.
However, the techniques typically used to construct low-dimensional spanners often result in an exponential dependence on $d$, and therefore perform poorly when $d=\Omega(\log n)$. 
This issue is not merely an artifact of the analysis. 
For example, the classical greedy-spanner algorithm outputs very good low-dimensional spanners~\cite{FS20}, 
whereas in the high-dimensional regime it may output far-from-optimal spanners; 
see \Cref{app:greedy-spanner}.
Thus, constructing spanners in high dimension often requires new algorithmic techniques.

The first result for the high-dimensional setting was obtained in~\cite{HIS13} for Euclidean spaces. 
It showed that, for every $t \geq 1$, one can construct an $O(t)$-spanner with $O(n^{1+1/t^2})$ edges using locality-sensitive hashing (LSH); 
see \Cref{def:LSH} for a formal definition. 
Comparable results were later obtained for other $\ell_p$ metrics using similar techniques~\cite{FN22,KP25,KPS25}.
Such spanners also have concrete algorithmic motivations, e.g., 
they have been deployed to build massive similarity graphs 
for clustering and graph learning~\cite{CareyEtAl22}.
For such applications, a small number of hops can be crucial. 
Formally, an \emph{$h$-hop $t$-spanner} for $(X,d)$ is a geometric $t$-spanner, 
in which for every $u,v \in X$, there is a path with at most $h$ edges and length at most $t\cdot d(u,v)$.
It was shown in~\cite{HIS13} that the stretch--size tradeoff mentioned above,
namely, stretch $O(t)$ using $O(n^{1+1/t^2})$ edges, 
is essentially optimal for $2$-hop Euclidean spanners. 
Nevertheless, for general spanners (i.e., without the hop restriction), 
this tradeoff remains poorly understood,
and even the following basic question, which is our main focus, is open.

\begin{question}\label{que:sparse-spanners}
    Does every $n$-point subset of high-dimensional Euclidean space admit an $O(1)$-spanner with $n^{1+o(1)}$ edges?
\end{question}

For general $\ell_p$ spaces, this question is settled only when $p=\infty$,
in which case the answer is negative: 
the classical Fr\'echet embedding maps every finite metric isometrically into $\ell_\infty$, 
and thus the hardness results from the graph setting apply~\cite{TZ05}.
For $\ell_1$, in contrast, the gap in our knowledge is the most pronounced: 
for fixed stretch greater than $2$, 
there is currently no upper bound that is better than the tradeoff known for general metrics, 
and also no lower bound that is truly superlinear (i.e., by a polynomial factor) 
for spanners without a hop bound.
The only relevant result, implicit in~\cite{HIS13}, 
is that the stretch--size tradeoff known for general metrics 
is optimal also for geometric spanners with a $2$-hop restriction.

\subsection{Our Results}

Our main contribution, presented in \Cref{sec:spanner-lb}, is a simpler proof of the lower bound for $2$-hop spanners from~\cite{HIS13} that also allows arbitrary Steiner vertices. Unlike an ordinary spanner, a \emph{Steiner spanner} for a metric $X$ may include additional vertices outside $X$; the original vertices are then called \emph{terminals}. In the following theorem and throughout, Steiner vertices are allowed to be ``non-metric''; that is, they are not required to lie in $\RR^d$. We state our result in $\ell_1$ for simplicity; comparable results hold for general $\ell_p$ metrics (see \Cref{rem:lb-general-ell_p}). 

\begin{theorem}
\label{thm:intro-two-hop}
Every $2$-hop $t$-spanner of the $d$-dimensional Hamming cube $(\{0,1\}^d,\|\cdot\|_1)$ 
has $(2^d)^{1+\Omega(1/t)}$ edges, 
even if Steiner vertices are allowed.
\end{theorem}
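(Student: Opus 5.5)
The plan is to reduce the statement to a small-set expansion (hypercontractivity) inequality on the noisy hypercube, applied separately to the neighborhood of each possible middle vertex of a $2$-hop path.

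\textbf{Setup.} Let $G$ be a $2$-hop $t$-spanner of $Q=\{0,1\}^d$, possibly with Steiner vertices, and let $\mathrm{wt}(\cdot)$ denote its edge weights. Every terminal pair $x,y$ is connected by a path of at most two edges and length at most $t\|x-y\|_1$. Such a path has a \emph{center} $w$, which is either a Steiner vertex or one of the endpoints; a direct edge $xy$ is viewed as the path through $w=x$ with $\mathrm{wt}(x,x)=0$.

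For each vertex $w$ of $G$, define $S_w\subseteq Q$ to be the set of terminal neighbors of $w$, together with $w$ itself if $w$ is a terminal. Set $a^w_x:=\mathrm{wt}(w,x)$, with $a^w_w=0$. Since $d_G\ge\|\cdot\|_1$ on terminals, every $x,y\in S_w$ satisfy
\[
a^w_x+a^w_y\ \ge\ \|x-y\|_1 .
\]
This is the only property of non-metric Steiner vertices that will be used. Finally, $\sum_w |S_w|\le 2|E(G)|+2^d$.

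\textbf{Noisy pairs and where they can be routed.} Sample $x$ uniformly from $Q$, and let $y$ be obtained from $x$ by flipping each bit independently with probability $\rho:=1/(16t)$. We may assume $d\gg t$, since otherwise the bound is trivial. Then with probability at least $1/2$ we have $\|x-y\|_1\le 2\rho d$.

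Any such pair is routed through some center $w$ with $x,y\in S_w$ and $a^w_x+a^w_y\le 2t\rho d$. In particular both $x$ and $y$ lie in
\[
A_w:=\{z\in S_w:\ a^w_z\le 2t\rho d\}.
\]
By the displayed inequality, $A_w$ has $\ell_1$-diameter at most $4t\rho d=d/4$. Hence $A_w$ lies in a Hamming ball of radius $d/4$, and its density satisfies $\mu_w:=|A_w|/2^d\le 2^{-(1-H(1/4))d}=2^{-cd}$ for an absolute constant $c>0$.

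\textbf{Counting via hypercontractivity.} By the union bound,
\[
\tfrac12\ \le\ \sum_w \Pr[x\in A_w,\ y\in A_w].
\]
Each term is bounded using the small-set expansion consequence of Bonami--Beckner hypercontractivity for the $(1-2\rho)$-correlated noise operator:
\[
\Pr[x,y\in A]\ \le\ \mu(A)^{2/(2-2\rho)}\ =\ \mu(A)^{1/(1-\rho)} .
\]
Therefore
\[
\tfrac12\ \le\ \sum_w \mu_w\,\mu_w^{\rho/(1-\rho)}\ \le\ 2^{-c\rho d}\sum_w\mu_w ,
\]
which gives $\sum_w|S_w|\ge\sum_w|A_w|\ge \tfrac12\,2^{d}\,2^{c\rho d}=(2^d)^{1+\Omega(1/t)}$. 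The edge count then follows from $\sum_w|S_w|\le 2|E(G)|+2^d$.

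\textbf{Main obstacle.} The delicate step is making sure the argument truly tolerates arbitrary non-metric Steiner vertices and direct edges. This is handled by never referring to the location of $w$: we work only with the weights $a^w$ and the domination constraint, which already forces every "cheap" neighborhood $A_w$ to have small diameter and hence to be a small set. The remaining points need only routine care: the concentration of $\|x-y\|_1$ around $\rho d$, the exact form of the small-set expansion inequality, and the constants, namely choosing $\rho=\Theta(1/t)$ small enough that $4t\rho<1/2$.
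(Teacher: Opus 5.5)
Your proof is correct, and it takes a genuinely different route from the paper's. The paper never applies hypercontractivity directly. Instead it converts the spanner into an $(r,2tr,p,0)$-sensitive LSH family:
\begin{itemize}
\item it deletes edges heavier than $tr$ and adds self-loops;
\item it applies a uniformly random shift $u\mapsto u\oplus x$, so that by Markov's inequality both endpoints of a close pair have degree $O(m/2^d)$ with constant probability;
\item it hashes each terminal to a uniformly random neighbor of its shifted copy.
\end{itemize}
This yields $p=\Omega((2^d/m)^2)$. The bound then follows by plugging an odd $r=\Theta(d/t)$ with $2tr\le d/4$ into the LSH lower bound of~\cite{MNP06} (Theorem~\ref{thm:mnp}), used as a black box.

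You effectively open that black box. Your diameter bound on the cheap neighborhoods $A_w$ plays the role of $q=0$: far pairs never share a bucket. The paper turns the overlapping family $\{A_w\}$ into a partition by random sampling; you instead union-bound over it and apply small-set expansion to each $A_w$ separately. This has three effects:
\begin{itemize}
\item the random shift and the degree averaging become unnecessary, since large neighborhoods are paid for automatically through $\sum_w\mu_w$;
\item you sidestep the parity and range hypotheses of Theorem~\ref{thm:mnp};
\item you avoid the squaring loss from two independent neighbor choices, though this only improves the constant in $\Omega(1/t)$.
\end{itemize}
The paper's route, in turn, is modular and makes the spanner--LSH connection explicit, which is one of its stated aims, so any improved LSH lower bound would transfer immediately. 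Both arguments treat non-metric Steiner vertices the same way. They use only that a path through any vertex is at least as long as the $\ell_1$ distance between its terminal endpoints, and both dispose of the case $d=O(t)$ by connectivity.
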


The proof in~\cite{HIS13} uses tools that were previously used in~\cite{MNP06} 
to prove LSH lower bounds. 
Specifically, it combines an isoperimetric inequality for the hypercube from~\cite{MNP06} with a sophisticated double-counting argument to lower-bound the size of $2$-hop spanners for the hypercube.
Also the construction of Euclidean $2$-hop spanners in~\cite{HIS13} uses LSH, 
which altogether indicates a close connection between these two objects.
Our proof makes this connection even more direct: 
we reduce constructing an LSH family for the hypercube 
to constructing a $2$-hop spanner for it.
Our reduction essentially constructs an LSH family 
using random walks on a shifted $2$-hop spanner of the hypercube; 
see \Cref{sec:spanner-lb} for a more detailed overview.

Our second contribution, presented in \Cref{sec:hop-reduction}, 
reduces constructing general spanners to constructing bounded-hop spanners, 
as follows. 

\begin{theorem}
\label{thm:hop-reduction}
Let $(X,d)$ be an $n$-point metric for $n\geq2$, and let $t,\mu\geq1$.
Suppose that every subset $Y\subseteq X$ has a $t$-spanner with at most
$\mu|Y|$ edges. Then $X$ has an $O(t)$-hop $O(t)$-spanner with
$O(n(\mu+\log n))$ edges.
\end{theorem}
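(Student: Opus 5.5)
Here is the plan. I would build a hierarchy of distance scales and use the hypothesis at each scale on a suitably chosen subset. The main tool for hop reduction is the standard trick of routing through hubs. The cleanest known approach replaces a path of many hops by a few jumps between ``centers'', while also keeping the degree budget linear in $\mu$. Concretely, I would first reduce to the hypothesis-given spanner $H$ of $X$ itself, with at most $\mu n$ edges and stretch $t$. The goal is then to shortcut $H$ so that every $t$-stretch path can be emulated with $O(t)$ hops. Additional edges cost $O(n\log n)$.

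\textbf{Step 1 (sampling hubs).} I would take nested random samples $X=Y_0\supseteq Y_1\supseteq\dots\supseteq Y_L$ with $L=O(\log n)$, where each point of $Y_{i-1}$ survives to $Y_i$ with probability $1/2$. Each point $x$ is connected directly to its nearest point $c_i(x)\in Y_i$ at every level. This gives $O(n\log n)$ edges, which is the source of the $\log n$ term.

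\textbf{Step 2 (spanners on the hubs).} For each level $i$ I would apply the hypothesis to $Y_i$, obtaining a $t$-spanner $H_i$ with $\mu|Y_i|$ edges. Summing over levels gives $\mu\sum_i |Y_i|=O(\mu n)$ in expectation, so the total size stays $O(n(\mu+\log n))$.

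\textbf{Step 3 (routing).} For a pair $u,v$ at distance $D$, I would pick the level $i$ at which hubs are ``dense at scale $D$''. The intended property is that $d(u,c_i(u))$ and $d(v,c_i(v))$ are $O(D)$ while the relevant path in $H_i$ uses few hops. The route would be $u\to c_i(u)$, then a short path in $H_i$, then $c_i(v)\to v$.

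Random sampling does not directly bound the hop count inside $H_i$, and this is the main obstacle. To fix it, I would change the choice of $Y_i$ to a deterministic hierarchical net: $Y_i$ is an $r_i$-net with $r_i=2^i\cdot\varepsilon$, say with $\varepsilon\approx\min d/n$ after aspect-ratio reduction. This introduces a new difficulty, since nets at many scales can have total size far exceeding $O(n)$ and the aspect ratio may be large. The standard fix is to charge each point only to the $O(\log n)$ levels where it is ``relevant''. I would do this via a compressed net tree, whose $O(n)$ nodes give a total net size of $O(n)$.

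With nets in place, I would bound hops by following the $H_i$-path $P$ between $c_i(u)$ and $c_i(v)$, which has length at most $t\cdot O(D)$, and snapping each of its vertices up to level $j$ with $r_j\approx D/t$. Consecutive snapped points are then either equal or connected in $H_j$ by a short route. Since the path has length $O(tD)$ and the snapped net points are $r_j$-separated, the path visits $O(t\cdot tD/D)$ distinct net points if I choose $r_j\approx D$ rather than $D/t$. Choosing $r_j=\Theta(D)$ gives $O(t)$ hops, with each snap adding stretch $O(D)$ per jump. The total error is then $O(t)\cdot O(D)=O(tD)$, i.e.\ stretch $O(t)$.

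The delicate point is that snapped consecutive points must be adjacent. To guarantee this, I would add at each level the edges between net points within distance $O(r_j)$. By packing arguments this is not bounded in general metrics, so instead I would use $H_j$ edges together with the observation that the $H_j$-path between two snapped points at distance $O(r_j)$ has length $O(tr_j)$. Iterating this recursively over levels yields $O(t)$ hops overall via a geometric series. The recursion is where I expect the main difficulty, and I would first try to control it with the compressed net tree and a potential argument.
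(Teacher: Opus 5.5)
Your proposal has genuine gaps in the routing step and in the size accounting.

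\textbf{Routing.} You rightly note that net separation should bound the hop count, but you never use it directly. Instead you take an $H_i$-path at an unspecified finer level and snap its vertices to a level $j$ with $r_j=\Theta(D)$. Consecutive snapped points are not adjacent, so each of the $O(t)$ jumps needs its own $H_j$-path of length $O(tD)$ and $O(t)$ hops. That yields $O(t^2)$ hops and stretch $O(t^2)$, not the $O(tD)$ total error you claim, and the ``geometric-series recursion'' meant to repair this is never specified.

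No snapping is needed. Choose $i$ with $D\in[2^i,2^{i+1})$ and let $x'=\pi_i(x)$, $y'=\pi_i(y)$ be the level-$i$ ancestors in nested $2^i$-nets $N_i$ (your $Y_i$). Then $d(x,x'),d(y,y')<2^{i+1}$ and $d(x',y')<6\cdot 2^i$. Take the path between $x'$ and $y'$ in a $t$-spanner of a subset of $N_i$. Every vertex of this path is a net point, so each edge is longer than $2^i$, while the total length is less than $6t2^i$. Hence it has fewer than $6t$ edges. If $x'=y'$, route through the lowest common ancestor of $x$ and $y$ in the net tree instead.

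\textbf{Size.} There are two holes here, and the paper fills each with a specific ingredient.

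First, building spanners on the full nets at all $\lceil\log_2\Phi\rceil$ levels costs $\mu\sum_i|N_i|$, which depends on $\Phi$. A compressed net tree has $O(n)$ nodes, but that does not make $\sum_i|N_i|=O(n)$. Moreover, the representatives you route through usually sit on non-branching nodes that a compressed tree discards. For example, two points at distance $D\in(2^i,2^{i+1})$, far from all others, must both be spanner vertices at level $i$, yet neither node branches there. The paper builds $G_i$ only on the \emph{active} points $A_i$, meaning net points with another net point within $8\cdot2^i$. This suffices because $x'\neq y'$ forces both to be active. It then proves $\sum_i|A_i|\le 10n$ from the telescoping bound $|A_i|\le 2(|N_i|-|N_{i+5}|)$, obtained by grouping active points into parts of size at least $2$ and diameter at most $2^{i+5}$.

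Second, direct edges from each point to its representative at every level cost $n\lceil\log_2\Phi\rceil$. Your assertion that each point is relevant at only $O(\log n)$ levels is unjustified: a compressed net tree can have depth $\Theta(n)$, for example for exponentially spaced points on a line. The ``aspect-ratio reduction'' you invoke is also not available for free in a general metric. The paper instead prunes the net tree to $O(n)$ nodes (leaves, active nodes, branching nodes, and the root). It then applies Alon--Schieber centroid shortcutting, which with $O(n\log n)$ edges gives every node a $2$-hop path, of at most tree length, to each of its ancestors.

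These two ingredients are exactly what produce the $O(n(\mu+\log n))$ bound.
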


The reduction is simple. Using the fact that every subset admits a sparse $t$-spanner, 
we first construct spanners on nets of $X$ at all distance scales. 
Additionally, we connect each point in $X$ to its nearest net point at every scale. 
We can then find a bounded-hop path between each pair of points $x,y$ 
by going through the nearby net points at the scale of their distance $d(x,y)$,
and using a spanner path on this net.
It is easy to see that this path has stretch $O(t)$,
and its hop bound follows from the separation between every two net points.

Although technically simple, this reduction has surprising consequences. 
Its contrapositive says that a sufficiently strong lower bound for bounded-hop spanners of a metric $X$  implies a lower bound without any hop restriction for some subset of $X$. 
Generalizing the lower bound of \Cref{thm:intro-two-hop} to more than $2$ hops seems like a potential route to unconditional lower bounds, but unfortunately, the proof of \Cref{thm:intro-two-hop} does not extend beyond two hops. 
This is a genuine obstacle: already with three hops, the Hamming cube has an $O(1)$-spanner of near-linear size $O((\log d) 2^d)$; see \Cref{app:three-hop-hypercube}.  
Finally, we show that \Cref{thm:hop-reduction} can also be used algorithmically 
to construct spanners in $\ell_p$ spaces for $p>2$. This result is presented in \Cref{sec:l_p-reduction}.

\subsection{Related Work}
Recent theoretical progress has focused on the very-low-stretch regime in high-dimensional Euclidean space. 
In the metric Steiner model, the auxiliary vertices are points in the ambient metric (in this case, Euclidean space), and all edge weights are induced by the metric. 
In the more permissive non-metric model, 
the auxiliary vertices are abstract (need not come from the ambient metric) and their edge weights may be chosen freely, 
provided that distances between the original points are preserved up to the stated stretch. 
Nearly quadratic lower bounds are known for metric Steiner spanners with stretch below $\sqrt{2}$, whereas non-metric Steiner points permit undirected $(1+\varepsilon)$-spanners with $n^{2-\Omega(\varepsilon^3)}$ edges~\cite{AZ23}.

A directed spanner is a weighted directed graph in which, for every ordered pair $x,y$ of original points, the shortest directed path from $x$ to $y$ has length between $\|x-y\|_2$ and $(1+\varepsilon)\|x-y\|_2$. In the non-metric Steiner model, directed spanners with $n^{2-\Omega(\varepsilon^2)}$ edges were constructed in~\cite{AZ23}. This was improved to $\widetilde{O}(n^{2-\Omega(\varepsilon)})$ edges\footnote{Throughout, the notation $\widetilde{O}(f)$ hides $\poly(\log f)$ factors, and $O_\alpha(\cdot)$ hides a factor that depends only on $\alpha$.} in~\cite{JPSWZ26}, together with a nearly matching $\widetilde{\Omega}(n^{2-O(\varepsilon)})$ lower bound. Such spanners can be used to approximate transportation distances: the construction in~\cite{AZ23} gives a truly subquadratic $(1+\varepsilon)$-approximation algorithm for Earth-Mover Distance, and the construction in~\cite{JPSWZ26} gives faster algorithms for general Wasserstein distances.

\section{Preliminaries}

\paragraph{Nets.}
\label{def:net}
Let $(X,d_X)$ be a metric space. A set $N\subseteq X$ is called an
\emph{$r$-net} if it is both $r$-covering and $r$-separated: 
every $x\in X$ has some $y\in N$ with $d_X(x,y)\leq r$, 
and all distinct $u,v\in N$ satisfy $d_X(u,v)>r$. 

\paragraph{The Hamming cube.}
We view $\HH$ as the vector space $\mathbb{F}_2^d$. For $u,x\in\HH$,
we write $u\oplus x$ for coordinate-wise addition modulo $2$. 

\paragraph{Locality-sensitive hashing.}
\label{def:LSH}
We use the standard definition from~\cite{IM98}. 
Let $(X,d_X)$ be a metric space and let $U$ be any set. 
A distribution $\mathcal{H}$ over maps $h:X\to U$ 
is called \emph{$(r,R,p,q)$-sensitive} 
if, for all $x,y\in X$,
\begin{align*}
 d_X(x,y)\leq r
 &\quad\longrightarrow\quad
 \Pr_{h\sim\mathcal{H}}[h(x)=h(y)]\geq p,\\
 d_X(x,y)>R
 &\quad\longrightarrow\quad
 \Pr_{h\sim\mathcal{H}}[h(x)=h(y)]\leq q.
\end{align*}
We will need the following LSH lower bound to prove our main result in \Cref{sec:spanner-lb}.
\begin{theorem}[Proposition 2.1 in~\cite{MNP06}]
\label{thm:mnp}
Let $\mathcal{H}$ be an $(r,R,p,q)$-sensitive family for the metric $(\HH, \|\cdot\|_1)$. 
If $r$ is an odd integer and $R<d/2$, then
\[
 p\leq
 \left(
   q+\exp\left(-\frac{1}{d}\left(\frac{d}{2}-R\right)^2\right)
 \right)^{\frac{e^{2r/d}-1}{e^{2r/d}+1}}.
\]
\end{theorem}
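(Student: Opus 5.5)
The plan is to compare two collision probabilities for a random $h\sim\mathcal H$: that of a pair at distance at most $r$, and that of a pair of independent uniform points. The bridge between them is Fourier analysis on $\HH$, namely the noise operator together with the Bonami--Beckner hypercontractive inequality. For a fixed $h$ and each value $u\in U$, let $f_u=\mathbf 1_{h^{-1}(u)}$ and $\mu_u=\mathbb E f_u$, so that $\sum_u\mu_u=1$.

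\emph{Step 1 (lower bound via a random walk).} Let $x$ be uniform and let $y$ be the endpoint of a non-lazy random walk of exactly $r$ steps on $\HH$ started at $x$, each step flipping a uniformly random coordinate. Then $\|x-y\|_1\le r$ always, so $\Pr[h(x)=h(y)]\ge p$. For fixed $h$ this probability equals
\[
\sum_u \langle f_u, K^r f_u\rangle=\sum_u\sum_{S}(1-2|S|/d)^r\,\hat f_u(S)^2 ,
\]
where $K$ is the one-step walk operator. Its eigenvalue on characters of level $k=|S|$ is $1-2k/d$.

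\emph{Step 2 (comparison with the noise operator; this uses that $r$ is odd).} Set $\rho=e^{-2r/d}$. I claim $(1-2k/d)^r\le \rho^{k}$ for every $0\le k\le d$.
\begin{itemize}
\item For $k\le d/2$ this follows from $1-z\le e^{-z}$.
\item For $k>d/2$ the base is negative, and since $r$ is odd the left side is negative, hence below $\rho^k>0$.
\end{itemize}
This is exactly where oddness matters: for even $r$, high levels would contribute positively with eigenvalues close to $1$. Summing over levels with nonnegative weights $\hat f_u(S)^2$ gives
\[
p\le \mathbb E_h\sum_u\langle f_u,T_\rho f_u\rangle .
\]

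\emph{Step 3 (hypercontractivity and interpolation).} Write $\langle f,T_\rho f\rangle=\|T_{\sqrt\rho}f\|_2^2$. Bonami--Beckner gives $\|T_{\sqrt\rho}f\|_2\le\|f\|_{1+\rho}$, so $\langle f_u,T_\rho f_u\rangle\le \mu_u^{2/(1+\rho)}=\mu_u^{1+s}$, where
\[
s=\frac{1-\rho}{1+\rho}=\frac{e^{2r/d}-1}{e^{2r/d}+1}\in(0,1).
\]
By log-convexity of $a\mapsto\sum_u\mu_u^{a}$ (equivalently, Hölder's inequality), applied between $a=1$ and $a=2$, we get $\sum_u\mu_u^{1+s}\le(\sum_u\mu_u)^{1-s}(\sum_u\mu_u^2)^{s}=(\sum_u\mu_u^2)^s$. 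Averaging over $h$ and applying Jensen's inequality to the concave map $z\mapsto z^s$ yields
\[
p\le\Bigl(\mathbb E_h\sum_u\mu_u^2\Bigr)^s .
\]

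\emph{Step 4 (independent pairs).} The quantity $\mathbb E_h\sum_u\mu_u^2$ is $\Pr[h(x)=h(x')]$ for independent uniform $x,x'\in\HH$. Split according to whether $\|x-x'\|_1>R$:
\begin{itemize}
\item If $\|x-x'\|_1>R$, the collision probability is at most $q$ by sensitivity.
\item Otherwise, since $\|x-x'\|_1\sim\mathrm{Bin}(d,1/2)$ and $R<d/2$, Hoeffding's inequality bounds the probability by $\exp(-2(d/2-R)^2/d)\le\exp(-(d/2-R)^2/d)$.
\end{itemize}
Plugging this into Step 3 gives the stated bound. I expect Step 2 to be the main obstacle, since one has to find a pair distribution supported on distance at most $r$ whose spectrum is dominated by $T_\rho$ at every level; the odd-length walk resolves it. Steps 3 and 4 are standard.
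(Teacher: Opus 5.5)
The paper does not prove this statement: it quotes it as Proposition~2.1 of~\cite{MNP06} and uses it as a black box, so there is no in-paper argument to compare against. Your proof is correct and, as far as I can tell, follows the original argument of~\cite{MNP06}, with the same exponent $2/(1+\rho)-1=\frac{e^{2r/d}-1}{e^{2r/d}+1}$ arising at the same step. The ingredients are the exact $r$-step walk, whose eigenvalues $(1-2k/d)^r$ are dominated by $e^{-2rk/d}$ at every level precisely because $r$ is odd; Bonami--Beckner, giving $\langle f_u,T_\rho f_u\rangle\le\mu_u^{2/(1+\rho)}$; H\"older and Jensen, to pass to the collision probability of a uniform pair; and a Chernoff--Hoeffding bound on $\Pr[\|x-x'\|_1\le R]$.
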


\section{Spanner Lower Bound}\label{sec:spanner-lb}
This section proves \Cref{thm:intro-two-hop}, that is, a lower bound for $2$-hop spanners of the Hamming cube. 
The main novel ingredient is the following reduction from such spanners to LSH families. 
Combining it with the LSH lower bound from~\cite{MNP06}, stated in \Cref{thm:mnp}, yields \Cref{thm:intro-two-hop}.

\begin{lemma}
\label[lemma]{lem:LSH-to-2-hop-spanner-reduction}
Let $G$ be a Steiner $2$-hop $t$-spanner of $(\HH,\|\cdot\|_1)$ with $m$ edges. 
Then, for every $r\geq 1$, there is an $(r,2tr,p,0)$-sensitive family on $(\HH,\|\cdot\|_1)$ 
with $p\geq \Omega\left(\left(\frac{2^d}{m}\right)^2\right)$.
\end{lemma}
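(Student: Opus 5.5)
Fix $r\ge 1$ and let $G$ be a Steiner $2$-hop $t$-spanner of $\HH$ with $m$ edges, vertex set $\HH\cup S$. The plan is to build a hash function from a random shift of $G$ together with a random step in the shifted graph.

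\textbf{The hash function.} Sample a uniform shift $u\in\HH$. For each $x$, look at the terminal $x\oplus u$ in $G$, and let $h(x)$ be a uniformly random element of the closed neighborhood $N[x\oplus u]$ in $G$. The choices for different $x$ are correlated only through $u$ and independent otherwise. The target collision probability is about $(2^d/m)^2$, which strongly suggests that a collision should be two independent uniform choices meeting at a common vertex.

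\textbf{Soundness (the $0$ on far pairs).} Suppose $h(x)=h(y)=w$. Then $w$ lies in $N[x\oplus u]\cap N[y\oplus u]$, so $x\oplus u$ and $y\oplus u$ are joined by a path of at most $2$ hops through $w$. To get $q=0$, I want this path to certify that the distance is small. This needs care: a $2$-hop path in a spanner need not be short for an arbitrary pair. So I would first prune $G$. Keep only edges $(a,w)$ whose weight is at most $tr$, discarding all longer edges, and use $N$ in the pruned graph.

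Any pair at distance at most $r$ has a path of length at most $tr$ with at most $2$ hops. Every edge on that path has weight at most $tr$, so it survives the pruning. Conversely, a common neighbor $w$ in the pruned graph gives, through the spanner lower bound $d_G\ge\|\cdot\|_1$,
\[
\|x-y\|_1=\|(x\oplus u)-(y\oplus u)\|_1\le d_G\le 2tr .
\]
This gives $q=0$ for $R=2tr$. Shift invariance of $\|\cdot\|_1$ under $\oplus$ is what makes the argument independent of $u$.

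\textbf{Collision probability for close pairs.} Let $\|x-y\|_1\le r$. For every $u$, the pruned graph contains a $\le 2$-hop path between $a=x\oplus u$ and $b=y\oplus u$, so there is a vertex $w\in N[a]\cap N[b]$. Then
\[
\Pr[h(x)=h(y)\mid u]\ \ge\ \frac{1}{(\deg(a)+1)(\deg(b)+1)} .
\]
The point of averaging over $u$ is that $a$ and $b$ are each uniform on $\HH$, though correlated. The function $(s,t)\mapsto 1/(st)$ is convex, so Jensen's inequality gives
\[
\mathbb E_u\!\left[\frac{1}{(\deg a+1)(\deg b+1)}\right]\ \ge\ \frac{1}{\mathbb E[(\deg a+1)(\deg b+1)]} .
\]
For the product I use $\deg a\cdot\deg b\le(\deg a^2+\deg b^2)/2$, which would require bounding second moments. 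That is bad, because degrees can be very skewed.

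\textbf{The main obstacle is handling skewed degrees; my first attempt is Markov pruning.} Since $\sum_{a\in\HH}\deg a\le 2m$, Markov's inequality says at most a quarter of the terminals have degree above $8m/2^d$. The probability over $u$ that $a$ or $b$ is such a high-degree terminal is therefore at most $1/2$. On the remaining event, of probability at least $1/2$, the collision probability is at least $(2^d/(8m+2^d))^2$, which gives $p=\Omega((2^d/m)^2)$ since $m\ge 2^d/2$.

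One more issue is that Steiner vertices can have huge degree. This does not hurt, because only the terminal's own random choice matters. If $w$ is a Steiner vertex, its degree never enters the bound. Only the degrees of $a$ and $b$ do, so the bound holds as stated.
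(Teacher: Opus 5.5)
Your proposal is correct and follows the paper's proof essentially step for step. It uses a uniform $\oplus$-shift, deletes edges of weight greater than $tr$, uses closed neighborhoods (the paper's self-loops), and applies Markov's inequality with a union bound to the degrees of the two shifted terminals. As a side remark, the Jensen route you abandoned also works if you use the joint convexity of $(s,t)\mapsto 1/(st)$ that you noted: it gives $\mathbb{E}\bigl[1/(ST)\bigr]\geq 1/(\mathbb{E}[S]\,\mathbb{E}[T])$, where $S,T$ are the closed degrees of $x\oplus u$ and $y\oplus u$, each with mean at most $2m/2^d+1$, so no second moments are needed.
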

We first combine the two components to derive the lower bound for 2-hop spanners of the Hamming cube.

\begin{proof}[Proof of \Cref{thm:intro-two-hop}]
Let $G$ be a $2$-hop Steiner $t$-spanner of $(\HH,\|\cdot\|_1)$ for $t\geq 1$, and denote $m=|E(G)|$.
We may assume that $\frac{d}{t}$ is larger than a sufficiently large constant,\
as otherwise the proof follows from the connectivity of $G$, 
which implies that $m\geq\Omega(|\HH|) = \Omega(2^d) = (2^d)^{1+\Omega(1/t)}$.
Choose an odd integer $r=\Theta(\frac{d}{t})$ such that $2tr\leq \frac{d}{4}$. By \Cref{lem:LSH-to-2-hop-spanner-reduction} and \Cref{thm:mnp}, there is an $(r,2tr,p,0)$-sensitive family satisfying
\begin{align*}
    \Omega\left(\left(\frac{2^d}{m}\right)^2\right)
    \leq p
    &\leq
    \left(e^{-\frac{1}{d}(\frac{d}{2}-2tr)^2}\right)^%
        {\frac{e^{2r/d}-1}{e^{2r/d}+1}} \leq
    \exp\left(
        -\frac{d}{16} \cdot \frac{e^{2r/d}-1}{e^{2r/d}+1}
    \right) \leq
    \exp(-\Omega(r)) = (2^d)^{-\Omega(1/t)},
\end{align*}
where the last inequality uses $\frac{e^{2r/d}-1}{e^{2r/d}+1}=\Theta(\frac{r}{d})$. Rearranging gives $m\geq (2^d)^{1+\Omega(1/t)}$, concluding the proof.
\end{proof}

Before proving \Cref{lem:LSH-to-2-hop-spanner-reduction}, let us explain its main idea, 
namely, how to construct an LSH for $\HH$ from a sparse $2$-hop spanner. 
Consider terminals $x,y\in\HH$ and suppose they are close. 
The random hash function tries to ``guess'' the middle vertex $z$ of a short path $x\to z\to y$ in $G$, that is, each of $x$ and $y$ is hashed to a uniformly random neighbor in $G$. 
If both $x$ and $y$ have low degree, the probability that they both select $z$ is relatively large.
Two major gaps in this simple scheme need to be resolved. 
Far points might collide, 
and some terminals might have high degree even though the average degree in $G$ is small.
To resolve the first issue, we eliminate far collisions by deleting from $G$ long edges. 
A short path $x\to z\to y$ survives this deletion, whereas two far terminals cannot retain a common neighbor.
To handle high degrees, we randomly shuffle $\HH$ by choosing a uniformly random $x\in\HH$ and mapping every terminal $u$ to $u\oplus x$ before hashing. This random shuffling preserves all distances and maps each terminal to one whose degree is at most a constant factor of the average degree with constant probability.
We proceed to give the formal proof. 

\begin{proof}[Proof of \Cref{lem:LSH-to-2-hop-spanner-reduction}]
Denote $n=2^d$. Delete from $G$ every edge of weight greater than $tr$ and add a self-loop at every terminal $u\in\HH$. Let $\Delta$ denote the average degree of the terminals in the resulting graph. Deleting edges cannot increase degrees, while adding the self-loops
increases each terminal degree by at most one. Thus, $\Delta\leq \frac{2m}{n}+1=O(\frac{m}{n})$.

We define $\cH$ by the following random procedure for hashing the cube vertices. First, sample $x\in\HH$ uniformly. Then, independently for every terminal $u\in\HH$, perform a one-step random walk from $u\oplus x$ on $G$ and hash $u$ to the endpoint of the walk. Each outcome of this procedure defines one hash function.

We first show that far points never collide. Let $u,w\in\HH$ such that $\|u-w\|_1>2tr$. If $h(u)=h(w)$, the two sampled edges form a path of weight at most $2tr$ from $u\oplus x$ to $w\oplus x$. Hence
\[
    \|u-w\|_1
    =\|(u\oplus x)-(w\oplus x)\|_1
    \leq d_G(u\oplus x,w\oplus x)
    \leq 2tr,
\]
a contradiction.

We next bound the probability that close points are hashed to the same value.
Let $u,w\in\HH$ be distinct and satisfy $\|u-w\|_1\leq r$.
For every fixed $x$, the original spanner contains a path of at most two edges and total weight at most $tr$ between $u\oplus x$ and $w\oplus x$.
Every edge of this path survives the deletion. Using a self-loop if the original path has one edge, the two shifted endpoints therefore share a common neighbor in $G$.
Each shifted endpoint is uniformly distributed over $\HH$, whose average terminal
degree in $G$ is $\Delta$. By Markov's inequality, each endpoint has degree at most $3\Delta$ with probability at least $2/3$.
Hence, by a union bound, both endpoints have degree at most $3\Delta$ with probability at least $1/3$.
Given this event, the two independent random-walk steps reach their common neighbor with probability at least $1/(3\Delta)^2$.
Therefore,
\[
    \Pr[h(u)=h(w)]\geq \frac{1}{3}\cdot\frac{1}{(3\Delta)^2}
    \geq
    \Omega\left(\left(\frac{n}{m}\right)^2\right),
\]
which concludes the lemma.
\end{proof}

\begin{remark} \label[remark]{rem:lb-general-ell_p}
Since $\|u-v\|_1=\|u-v\|_p^p$ for all $u,v\in\HH$, 
the same proof gives an analogous result in $\ell_p$ with stretch $t^{1/p}$.
For $p=2$, this recovers the Euclidean lower bound from~\cite{HIS13}.
\end{remark}

\section{From Spanners to Hop-Bounded Spanners}
\label{sec:hop-reduction}

This section proves \Cref{thm:hop-reduction}, showing that if every subset admits a sparse spanner, 
then the entire metric admits a sparse hop-bounded spanner.
To present the main idea clearly, 
we first prove the following weaker version of \Cref{thm:hop-reduction}, which has dependence on the aspect ratio of the metric, 
and then remove this dependence using a standard but slightly more involved argument.
Finally, we provide another application of \Cref{thm:hop-reduction}, 
for the construction of spanners in general $\ell_p$ spaces.

\begin{theorem}
\label{thm:hop-reduction-aspect-ratio}
Let $(X,d)$ be an $n$-point metric with aspect ratio $\Phi=\frac{\max_{u\neq v}d(u,v)}{\min_{u\neq v}d(u,v)}$.
Let $t,\mu\geq 1$, and suppose that every subset $Y\subseteq X$ has a $t$-spanner with at most $\mu|Y|$ edges.
Then $X$ has an $O(t)$-hop $O(t)$-spanner with $O\bigl(\mu n(1+\log\Phi)\bigr)$ edges.
\end{theorem}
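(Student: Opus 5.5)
We build a hierarchy of nets, put a sparse spanner on each net, and attach every point to its nearest net point at each scale, exactly as sketched after \Cref{thm:hop-reduction}. Normalize so that $\min_{u\neq v}d(u,v)=1$, hence $\max d\le\Phi$. For every integer $i$ with $0\le i\le \lceil\log_2\Phi\rceil+1$, let $r_i=2^i$ and fix an $\varepsilon r_i$-net $N_i\subseteq X$, where $\varepsilon=\frac{1}{ct}$ for a suitable constant $c$. The graph $H$ consists of two kinds of edges. First, for each $i$, take the hypothesized $t$-spanner $G_i$ of the subset $N_i$, which has at most $\mu|N_i|\le\mu n$ edges. Second, for each $x\in X$ and each $i$, add an edge from $x$ to a nearest point $\pi_i(x)\in N_i$, which gives at most $n$ edges per scale. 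Summing over the $O(1+\log\Phi)$ scales gives $O(\mu n(1+\log\Phi))$ edges. All edges carry their metric weights, so $d_H\ge d$ automatically.

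Next we route a pair. Given $x\ne y$, pick $i$ with $r_{i-1}<d(x,y)\le r_i$; this exists since $1\le d(x,y)\le\Phi$. Let $x'=\pi_i(x)$ and $y'=\pi_i(y)$. Then $d(x,x'),d(y,y')\le\varepsilon r_i\le 2\varepsilon d(x,y)$, and hence $d(x',y')\le(1+4\varepsilon)d(x,y)$. The path is $x\to x'$, then a shortest path from $x'$ to $y'$ in $G_i$, then $y'\to y$. Its length is at most
\[
2\varepsilon r_i+t(1+4\varepsilon)d(x,y)\le (4\varepsilon+t+4t\varepsilon)\,d(x,y)=O(t)\,d(x,y).
\]
The case $x'=y'$ is trivially handled by two hops.

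The step I expect to need the most care is the hop bound. We cannot take an arbitrary shortest path of $G_i$, since a priori its hop count is unbounded; the point is to use separation. The shortest $x'$–$y'$ path $P$ in $G_i$ has length at most $t\,d(x',y')\le t(1+4\varepsilon)r_i\le 2t r_i$. Every edge of $P$ joins two distinct points of $N_i$, which is $\varepsilon r_i$-separated, so each edge has weight greater than $\varepsilon r_i$. Therefore $P$ has fewer than
\[
\frac{2tr_i}{\varepsilon r_i}=2ct^2
\]
edges. This is $O(t^2)$, not $O(t)$, so the net radius must be chosen more carefully. The fix is to take $\varepsilon$ constant, say $\varepsilon=1/8$. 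The stretch is then still $2\varepsilon r_i+t(1+4\varepsilon)d(x,y)=O(t)\,d(x,y)$, because the detour cost is only $O(d(x,y))$ and does not require $\varepsilon$ to be small relative to $1/t$. The hop count becomes at most $\frac{2tr_i}{\varepsilon r_i}=16t$, plus the two connecting edges, which is $O(t)$ hops. With this choice of constant $\varepsilon$, both guarantees hold simultaneously, completing the argument.
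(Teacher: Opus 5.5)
Your final argument, with constant $\varepsilon=1/8$, is correct and is essentially the paper's proof. The paper uses $2^i$-nets (effectively $\varepsilon=1$), adds a $t$-spanner on each net plus the edges $\{x,\pi_i(x)\}$, and gets the hop bound by dividing the spanner-path length $<4t2^i$ by the net separation $2^i$, exactly as you do; your initial choice $\varepsilon=1/(ct)$ was an unnecessary detour, since the connection cost $2\varepsilon r_i$ is already $O(d(x,y))$ for any constant $\varepsilon$.
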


The idea is simple: If a set of points is $r$-separated, then every path of length $s$ between them contains at most $s/r$ edges. 
We therefore construct multiple spanners for nets of $X$ instead of constructing a spanner directly on $X$. 

\begin{proof}
After rescaling, assume that the minimum distance in $X$ is $1$. Let $N_0=X$, and for every $1\leq i\leq \lceil\log_2\Phi\rceil$, let $N_i$ be a $2^i$-net of $X$. For each $x\in X$, let $\pi_i(x)$ be a closest point of $N_i$, where $\pi_0(x)=x$. Construct $G$ by adding a $t$-spanner $G_i$ of every $N_i$ and the edge $\{x,\pi_i(x)\}$ for every $x$ and $i$. The graph $G$ has $O(\mu n(1+\log\Phi))$ edges.

It remains to bound the stretch and the number of hops. Fix $x,y\in X$, write $D=d(x,y)$, and choose $i$ such that $D\in[2^i,2^{i+1})$. Let $x'=\pi_i(x)$ and $y'=\pi_i(y)$. Since $d(x,x'),d(y,y')\leq2^i$, we have $d(x',y')<4\cdot2^i$. Hence $G_i$ contains an $x' \to y'$ path of length less than $4t2^i$. Since $N_i$ is $2^i$-separated, this path uses fewer than $4t$ hops. Together with $\{x,x'\}$ and $\{y,y'\}$, it gives an $x \to y$ path with fewer than $4t+2$ hops and length at most $4t2^i+2^{i+1}\leq(4t+2)D$, as needed.
\end{proof}

\paragraph{Removing the Aspect-Ratio Dependence.}

The logarithmic dependence on the aspect ratio comes from constructing a spanner on the entire net and connecting every point directly to the net at every level. We remove it using an argument inspired by~\cite{HIS13}. At each level, we construct a spanner only on a suitable subset of the net. These subsets still contain the points needed to obtain short paths, and the total size of all such subsets is $O(n)$. We then use the tree-shortcutting construction from~\cite{AS87} to replace all direct point-to-net connections with $O(n\log n)$ edges, thereby ensuring that net points can be reached via two-hop paths.

\begin{proof}[Proof of \Cref{thm:hop-reduction}]

Rescale so that the minimum distance in $X$ is $1$, and let $\Phi$ be its aspect ratio. We now use hierarchical $2^i$-nets $X=N_0\supseteq N_1\supseteq\cdots\supseteq N_H$,
where $H=\lceil\log_2\Phi\rceil+1$ and $N_H$ is a singleton. Consider an auxiliary net tree which has one node $(i,z)$ for each $z\in N_i$. For $i<H$, the parent of $(i,z)$ is a node $(i+1,z')$, where $z'\in N_{i+1}$ satisfies $d(z,z')\leq2^{i+1}$; when $z\in N_{i+1}$, choose $z'=z$. Give this edge length $d(z,z')$. The root is the unique node at level $H$. Let $\pi_i(x)$ denote the point labeling the level-$i$ ancestor of $(0,x)$.
For convenience, set $N_i=N_H$ for $i>H$.

Call $z\in N_i$ \emph{active} if another point of $N_i$ lies within distance $8\cdot2^i$, and let $A_i \subset N_i$ be the set of active points.

\begin{claim}
\label[claim]{clm:active-count} $\sum_{i\geq0}|A_i|\leq10n$.
\end{claim}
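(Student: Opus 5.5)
The plan is a charging argument that uses the nesting of the hierarchical nets $N_0\supseteq N_1\supseteq\cdots$. Every point of $X$ eventually stops appearing in the nets. So $\sum_{j\geq0}|N_j\setminus N_{j+1}|\leq |N_0|=n$: the sets $N_j\setminus N_{j+1}$ are disjoint, because a point lies in exactly the levels $0,\dots,j$ of a prefix. For $i\geq H$ the net $N_i$ is a singleton, so $A_i=\emptyset$ and only finitely many levels contribute. The goal is to show, for every $i\ge 0$,
\[
|A_i|\leq 2\,|N_i\setminus N_{i+4}|,
\]
and then to sum over $i$.

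First split $A_i$ into $A_i\setminus N_{i+4}$ and $A_i\cap N_{i+4}$. The first part is trivially contained in $N_i\setminus N_{i+4}$.

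For the second part, each $z\in A_i\cap N_{i+4}$ has, by the definition of active, some $w_z\in N_i$ with $w_z\neq z$ and $d(z,w_z)\le 8\cdot 2^i$. I claim $w_z\notin N_{i+4}$. Indeed, $N_{i+4}$ is $2^{i+4}$-separated, while $z$ and $w_z$ are at distance at most $8\cdot2^i<2^{i+4}$; since $z\in N_{i+4}$, the point $w_z$ cannot also lie there. So $z\mapsto w_z$ maps $A_i\cap N_{i+4}$ into $N_i\setminus N_{i+4}$.

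Next, this map is injective. Suppose $z_1\neq z_2$ in $A_i\cap N_{i+4}$ have $w_{z_1}=w_{z_2}=w$. Then
\[
d(z_1,z_2)\le d(z_1,w)+d(w,z_2)\le 16\cdot 2^i=2^{i+4},
\]
which contradicts the $2^{i+4}$-separation of $N_{i+4}$, since that requires distance strictly greater than $2^{i+4}$. Therefore $|A_i\cap N_{i+4}|\le|N_i\setminus N_{i+4}|$, and hence $|A_i|\le 2|N_i\setminus N_{i+4}|$.

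Finally, sum over $i$. Write $|N_i\setminus N_{i+4}|=\sum_{j=i}^{i+3}|N_j\setminus N_{j+1}|$, which holds by nesting. Each level $j$ is counted for at most four values of $i$. This gives
\[
\sum_{i\ge0}|A_i|\le 2\cdot 4\sum_{j\ge0}|N_j\setminus N_{j+1}|\le 8n\le 10n.
\]

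The main obstacle is the injectivity step. Charging an active point to a nearby partner naively allows many points to charge the same partner, a multiplicity that depends on the doubling dimension. Choosing the comparison level $i+4$, whose separation $2^{i+4}$ is at least twice the activity radius $8\cdot 2^i$, is exactly what rules this out via the triangle inequality.
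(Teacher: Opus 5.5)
Your proof is correct. Its skeleton matches the paper's. Both arguments prove a per-level bound $|A_i|\le 2(|N_i|-|N_{i+c}|)$ for a constant $c$, then telescope using the nesting of the nets. By nesting, your $|N_i\setminus N_{i+4}|$ equals $|N_i|-|N_{i+4}|$, so you have $c=4$ where the paper has $c=5$.

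The two arguments differ in how they prove the per-level bound.
\begin{itemize}
\item The paper takes a maximal matching among active points at distance at most $8\cdot2^i$ and attaches each unmatched active point to a nearby matched one. This partitions $A_i$ into groups of size at least $2$ and diameter at most $2^{i+5}$. The $2^{i+5}$-separated subset $N_{i+5}\subseteq N_i$ keeps at most one point per group, so $|N_{i+5}|\le|N_i\setminus A_i|+|A_i|/2$.
\item You charge each active point that survives to level $i+4$ to its activity witness. The identity $2\cdot 8\cdot2^i=2^{i+4}$, with the strict inequality in the definition of separation, shows two things at once: the witness has already been removed, and the charging is injective.
\end{itemize}
Your version avoids building the partition and bounding its diameter, and it gives the slightly better constant $8n$. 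The paper's grouping argument is the more familiar packing-style statement: every cluster of active points collapses to at most one point a constant number of levels up.

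One wording slip: under the convention $N_i=N_H$ for $i>H$, the root point never leaves the nets, so not every point ``eventually stops appearing.'' You only use that the sets $N_j\setminus N_{j+1}$ are disjoint subsets of $X$, so the bound $\sum_j|N_j\setminus N_{j+1}|\le n$ still holds.
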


\begin{proof}
For every $i$, partition $A_i$ into sets of size at least $2$, each of diameter at most $2^{i+5}$. This can be done by taking a maximal matching between points in $A_i$ whose distance is at most $8\cdot2^i$ and then assigning each unmatched point to a part containing a point within distance $8\cdot2^i$ from it.
Since $N_{i+5}$ is $2^{i+5}$-separated, it contains at most one point from each part, and therefore $|N_{i+5}|\leq |N_i\setminus A_i|+\frac{|A_i|}{2}$. Rearranging gives
$|A_i| \leq 2(|N_i|-|N_{i+5}|)$, and thus
\[
\sum_{i\geq0}|A_i| \leq \sum_{i\geq0} 2(|N_i|-|N_{i+5}|) \leq 2\sum_{i=0}^4 |N_i| \leq 10n.
\]
\end{proof}

For every $i$, add a $t$-spanner $G_i$ of $A_i$. By \Cref{clm:active-count}, these spanners add $O(\mu n)$ edges in total.

It remains to replace the direct point-to-net edges, which would still cause the spanner size to depend on $\Phi$. To do so, we first prune the auxiliary net tree. Keep the level-zero nodes, the active nodes, the root, and every node with at least two children. Connect every remaining node to its lowest remaining ancestor, or to the root if it has no other remaining ancestor. \Cref{clm:active-count} bounds the number of active occurrences, and a rooted tree with $n$ leaves has fewer than $n$ nodes with at least two children. Thus, the pruned tree has $O(n)$ nodes. Give each edge of the pruned tree the length of the corresponding path in the original net tree. The path from $x$ to a remaining level-$i$ ancestor has length at most $\sum_{j=1}^i2^j$; in particular,
\begin{equation}
\label{eq:nested-net-cover}
    d(x,\pi_i(x))\leq\sum_{j=1}^i2^j<2^{i+1}.
\end{equation}

Now, apply the tree-shortcutting construction from~\cite{AS87}, that is, recursively choose a centroid (a vertex whose removal leaves components of size at most half the tree), connect it to all its ancestors and descendants using their tree distances as edge weights, and continue in every component left after deleting it. The recursion has logarithmic depth and therefore adds $O(n\log n)$ edges. Moreover, when an ancestor-descendant pair is first separated, the chosen centroid lies on its tree path, giving a path of at most two edges with the same length. Replacing each of these edges by the metric edge between its labels can only decrease its length.

Add these edges and the spanners $G_i$. The resulting graph has $O(n(\mu+\log n))$ edges. To verify the stretch and the number of hops, fix distinct $x,y\in X$, write $D=d(x,y)$, and choose $i$ such that $D\in[2^i, 2^ {i+1})$ and put $x'=\pi_i(x)$ and $y'=\pi_i(y)$ where by \eqref{eq:nested-net-cover}, $d(x,x'),d(y,y')<2^{i+1}$.

If $x'=y'$, let $z$ be the lowest common ancestor of $x$ and $y$ in the net tree. It has at least two children and lies at a level at most $i$, so it was retained. Routing through $z$ therefore uses at most $4$ hops and has length less than $2^{i+2} \leq 4D$.
Otherwise, $x'\neq y'$, and hence, by the triangle inequality, $d(x',y') < 2\cdot2^{i+1} + D \leq 6\cdot2^i$. Therefore, both $x'$ and $y'$ are active points, and thus they survived the pruning.
Route from $x$ to $x'$ in at most two hops, follow a shortest path in $G_i$ from $x'$ to $y'$, and then route from $y'$ to $y$ in at most two hops. The middle path has length less than $6t2^i$ and uses fewer than $6t$ hops. Thus, the resulting path has at most $6t+4$ hops and length less than
\[
    2^{i+2}+6t2^i\leq(6t+4)D\leq10tD,
\]
as needed.
\end{proof}

\subsection{Application to $\ell_p$-Spanners}\label{sec:l_p-reduction}

We proceed to give an additional application of \Cref{thm:hop-reduction}; specifically, we show that spanner bounds for $\ell_2$ translate into spanner bounds for $\ell_p$ for every fixed $p>2$, as follows. 

\begin{theorem}
\label{thm:lp-from-l2} 
Let $2<p<\infty$ and $n\geq2$. Suppose that, for some function $s:[1,\infty)\to[1,\infty)$, every finite $Y\subset\ell_2$ admits an $s(t)$-spanner with $O(|Y|^{1+1/t})$ edges for every $t\geq1$. Then, for every $t \geq 1$, every $n$-point $X\subset\ell_p$ of aspect ratio $\operatorname{poly}(n)$ admits an $O_p\bigl(t^{1/2-1/p}s(2t)\bigr)$-spanner with $\Tilde{O}\bigl(n^{1+1/t}\bigr)$ edges.
\end{theorem}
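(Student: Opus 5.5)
The plan is to combine \Cref{thm:hop-reduction-aspect-ratio} (or \Cref{thm:hop-reduction}) with a metric embedding of $\ell_p$ into $\ell_2$ that is good only at a single distance scale. The point of bounded hops is that a path with $h$ hops, each of length comparable to the scale, has its $\ell_p$ length controlled by its $\ell_2$ image up to a factor depending on $h$. We therefore aim to build, for each scale, a spanner on a net in which paths have few hops, using the $\ell_2$ spanner hypothesis pulled back through a scale-specific map. For the embedding we would use the Mazur map, or a random-projection/Schoenberg-type map $M:\ell_p\to\ell_2$ at scale $r$. Such a map is bi-H\"older at scale $r$: for points at distance $\le Kr$ it satisfies $\|M(u)-M(v)\|_2\approx r^{1-p/2}\|u-v\|_p^{p/2}$ up to $O_p(1)$ factors, for $\|u-v\|_p\gtrsim r$. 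This is the source of the exponent $1/2-1/p$ after optimizing hop counts against H\"older distortion.

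The steps, in order, are as follows.
\begin{enumerate}
\item Since the aspect ratio is $\poly(n)$, there are $O(\log n)$ scales $r=2^i$, so we may afford an $\tilde O(\cdot)$ overhead by summing over scales.
\item At scale $r$, take an $r$-net $N_i$ of $X$. Cover the pairs at distance $\Theta(r)$ by clusters: use a padded decomposition, or simply restrict to a ball of radius $O(tr)$ around each net point, keeping overlap $\tilde O(1)$ through standard ball-carving with logarithmic overlap.
\item Map each cluster to $\ell_2$ with the scale-$r$ embedding and apply the hypothesis with parameter $2t$. This yields an $s(2t)$-spanner of size $O(|Y|^{1+1/(2t)})$, which is also $O(|Y|^{1+1/t})$.
\item Pull the spanner back. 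To control the $\ell_p$ length of a pulled-back path, first make it hop-bounded. Apply \Cref{thm:hop-reduction-aspect-ratio} inside $\ell_2$ to the mapped cluster, which has aspect ratio $\poly(n)$ and where every subset admits an $s(2t)$-spanner with $\mu=O(n^{1/(2t)})$ edges per point. This gives an $O(s(2t))$-hop $O(s(2t))$-spanner with $\tilde O(n^{1+1/t})$ edges.
\end{enumerate}

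For the stretch analysis, fix $x,y$ at distance $D\approx r$ and route through their net points, as in the proof of \Cref{thm:hop-reduction-aspect-ratio}. In $\ell_2$ the image path has length $O(s)\cdot\|M(x')-M(y')\|_2$ and uses $h=O(s)$ hops. Each hop $\{a,b\}$ has $\ell_p$ length related to its $\ell_2$ length through the inverse H\"older bound $\|a-b\|_p\lesssim r^{1-2/p}\|M(a)-M(b)\|_2^{2/p}$. Summing over hops and applying H\"older's (power-mean) inequality gives
\[
\sum_j\|a_j-b_j\|_p\lesssim r^{1-2/p}\,h^{1-2/p}\Bigl(\sum_j\|M(a_j)-M(b_j)\|_2\Bigr)^{2/p}.
\]
Plugging in the path length and $\|M(x')-M(y')\|_2\approx r$, this becomes roughly $r\,h^{1-2/p}s^{2/p}$.

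This naive count gives stretch about $s$, which is too weak. To reach $t^{1/2-1/p}s(2t)$, the intended trick is to choose the embedding scale coarser, namely $r'=r\sqrt t$ or similar. Then the H\"older distortion of the Mazur-type map over the relevant range of distances (ratio about $t$) contributes only $t^{1/2-1/p}$, while the stretch $s(2t)$ of the $\ell_2$ spanner enters linearly. The main obstacle, and the step I would work on first, is exactly this bookkeeping: choosing the scale, the hop bound, and the cluster radius so that the H\"older loss is $t^{1/2-1/p}$ rather than a power of $s$. I would first try to make each hop of the hop-bounded $\ell_2$ path have length at most about $t\cdot r$, using the separation and hop argument. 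The map is then only needed to be accurate on distances in $[r,tr]$, where the Mazur map has distortion $O_p(t^{1/2-1/p})$ for $p>2$ via the $\ell_2\to\ell_p$ comparison on that range.
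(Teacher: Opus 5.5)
There is a genuine gap. It lies in your clustering step, which is exactly where the factor $t^{1/2-1/p}$ comes from.

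The Mazur map (\Cref{thm:mazur}) can be applied to a cluster only with scale parameter $C_0$ at least the cluster's diameter $\Delta$. So the correct per-hop bound is $\|a-b\|_p\leq O_p\bigl(\Delta^{1-2/p}\|M(a)-M(b)\|_2^{2/p}\bigr)$, not the same bound with $r^{1-2/p}$. Redoing your H\"older computation with $h=O(s)$ hops then gives, for a pair at distance $D$, a path of length $O_p\bigl((\Delta/D)^{1-2/p}s\bigr)\cdot D$. Your ``naive'' stretch $O(s)$ is therefore not too weak; it would be stronger than the theorem. It is simply unattainable, because it presumes clusters of diameter $O(D)$ that still capture every close pair at small total size.

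The factor $t^{1/2-1/p}$ is a \emph{loss}, equal to $(\Delta/D)^{1-2/p}$ with $\Delta/D=\Theta_p(\sqrt t)$. It is not a gain obtained by choosing a coarser scale. It is also false that the Mazur map has distortion $O_p(t^{1/2-1/p})$ on distances in $[r,tr]$; its guaranteed bi-H\"older distortion there is only of order $t^{p/2-1}$.

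Your proposed cover fails as well. In high dimension there is no packing bound on net points inside a ball, so ball carving does not give $\tilde O(1)$ overlap at diameter ratio $O(t)$. Even if it did, clusters of diameter $O(tr)$ would give stretch only $O_p(t^{1-2/p}s(2t))$.

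The missing ingredient is the $\ell_p$-specific capped decomposition of \Cref{thm:lp-capped}. A standard padded decomposition is not enough. What is needed is a partition whose clusters have diameter only $O_p(\sqrt t)$ times the target distance, at the price of keeping a close pair together with probability merely $n^{-1/(2t)}$.

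The paper uses it as follows. At each of the $O(\log n)$ scales $2^i$, it samples $L=O(n^{1/(2t)}\log n)$ partitions with cluster diameter $\Delta_i=O_p(\sqrt t\,2^i)$. With high probability, every pair at distance at most $2^i$ then shares a cluster. The overlap $L$ is paid for by the half of the size budget you left unused. Within each cluster $C$ of each partition, the Mazur map at scale $\Delta_i$, the $\ell_2$ hypothesis with parameter $2t$, and \Cref{thm:hop-reduction} with $\mu=O(n^{1/(2t)})$ give $O(|C|(n^{1/(2t)}+\log n))$ edges and $O(s(2t))$ hops. Summed over all clusters, partitions and scales, this is $\tilde O(n^{1+1/t})$ edges.

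For the stretch, choose $i$ minimal with $\|x-y\|_p\leq2^i$; then $\Delta_i/\|x-y\|_p=O_p(\sqrt t)$. Your H\"older computation with this $\Delta$ yields exactly $O_p(t^{1/2-1/p}s(2t))$. No nets or routing through net points are needed, since $x$ and $y$ themselves lie in a common cluster.
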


The best currently known spanner construction for $n$-point subsets of $\ell_p$, $p>2$, is given in~\cite{KP25,KPS25}. It yields $O_p(\sqrt{t})$-spanners with $\Tilde{O}(n^{1+1/t})$ edges for $t\geq1$, as in the $\ell_2$ case. Hence, \Cref{thm:lp-from-l2} currently does not give an improved construction in these spaces.
However, unlike \Cref{thm:lp-from-l2}, the construction in~\cite{KPS25} does not follow from a direct reduction to $\ell_2$, where much better results are often available than in the general $\ell_p$ setting.

To prove \Cref{thm:lp-from-l2}, we use two tools that also underlie the construction of $\ell_p$ spanners in~\cite{KPS25}. The first is a probabilistic decomposition for point sets in $\ell_p$.
\begin{definition}
Let $(X,d)$ be a finite metric space, let $\tau\geq1$, $\Delta>0$, and $\eta\in(0,1]$. A distribution $\mathcal D$ over partitions of $X$ is \emph{$(\tau,\Delta,\eta)$-capped} if every cluster has diameter at most $\Delta$ and, for every $x,y\in X$ with $d(x,y)\leq\Delta/\tau$,
\[
 \Pr_{P\sim\mathcal D}[P(x)=P(y)]\geq\eta.
\]
We say that $X$ admits a $(\tau,\eta)$-capped decomposition if it admits a $(\tau,\Delta,\eta)$-capped decomposition for every $\Delta>0$. 
\end{definition}

\begin{theorem}[\protect{\cite[Remark~1.6]{KPS25}}]
\label{thm:lp-capped}

For every $p>2$ and $t\geq1$, every $n$-point subset of $\ell_p$ admits an $(O_p(\sqrt{t}),n^{-1/t})$-capped decomposition.
\end{theorem}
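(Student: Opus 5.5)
This statement is quoted from \cite{KPS25}, and I do not have its proof, so what follows is a reconstruction that may not match theirs. The plan is to reduce to Euclidean ball carving, which already achieves this tradeoff. In $\ell_2$, for $t\geq1$ and every scale $\Delta$, there is an $(O(\sqrt t),\Delta,n^{-1/t})$-capped decomposition of any $n$-point set. One way to get it: apply a Johnson--Lindenstrauss projection to dimension $k=O(\log n)$. Then repeatedly carve balls of radius $\Delta/2$ around random Gaussian points, in the style of Andoni--Indyk. The probability that a ball separates $x,y$ with $\|x-y\|_2\leq\Delta/\tau$ is controlled by the ratio of Gaussian measures of intersections. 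For $\tau=\Theta(\sqrt t)$, the probability of staying together is $n^{-O(1/t)}$; reparametrising $t$ gives $n^{-1/t}$. So the task is to transfer this to $\ell_p$, $p>2$, losing only an $O_p(1)$ factor in $\tau$.

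\textbf{Step 1: localise.} Fix $\Delta$ and rescale so that $\Delta=1$. I first want to restrict to pieces of $\ell_p$-diameter $O_p(1)$ with no loss in the collision probability. For instance, impose a random shifted grid of side $L=C_p$ in $\ell_\infty$, which is coarser than $\ell_p$. The trouble is that with many coordinates the per-coordinate cutting losses accumulate, so this step needs care. An alternative is to skip localisation and use a global map.

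\textbf{Step 2: global map.} Use the Mazur map $M(x)_i=\operatorname{sign}(x_i)|x_i|^{p/2}$, which sends $\ell_p$ into $\ell_2$. Restricted to a ball of radius $O(1)$, it satisfies the following. First, $\|M(x)-M(y)\|_2\leq O_p(1)\|x-y\|_p$ (upper bound). Second, it satisfies the reverse Hölder bound $\|x-y\|_p\leq O_p(1)\|M(x)-M(y)\|_2^{2/p}$. Then I carve in $\ell_2$ at a scale $\delta$ chosen so that $\ell_2$-clusters of diameter $\delta$ pull back to $\ell_p$-diameter at most $1$; this forces $\delta\approx c_p$, a constant. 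Close pairs with $\|x-y\|_p\leq 1/\tau$ map to $\ell_2$-distance $O_p(1/\tau)$. So the $\ell_2$ carving with parameter $\sqrt t$ at scale $\delta$ captures them with probability $n^{-1/t}$, provided $\tau=O_p(\sqrt t)$. Because the relevant scale is a constant (after localisation), the Hölder exponent only costs $p$-dependent constants. Finally, compose with the localisation partition by intersecting clusters. The collision probabilities multiply, so the localisation step must keep close pairs together with probability $\Omega(1)$, or at least $n^{-O(1/t)}$.

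\textbf{Main obstacle.} The difficulty is Step 1 in high dimension. Standard $\ell_p$ padded decompositions lose dimension- or $\log n$-dependent factors. Also, the Mazur map is only bi-Hölder near the origin of a bounded ball, and translating each piece to its own centre must be consistent across pieces. My first attempt would be to avoid a separate localisation. I would use a randomly translated Mazur map $x\mapsto M(x-a)$ with $a$ drawn from a box of side $O(1)$ around a fixed net point. I would argue coordinatewise that $|\,|u|^{p/2}-|v|^{p/2}|\lesssim_p|u-v|\cdot(|u|+|v|)^{p/2-1}$. Since coordinates where both $|u|,|v|$ are large contribute little to the $\ell_p$ ball, this would keep the Lipschitz constant $O_p(1)$ for the upper bound. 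Failing that, I would fall back on the cited construction of \cite{KPS25} as a black box, which is how the paper uses it.
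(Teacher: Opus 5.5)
The paper gives no proof of this statement. It imports it from \cite{KPS25} and uses it as a black box, so your fallback matches the paper's usage. Your reconstruction, however, does not prove the statement, and the gap is exactly at the step you flag.

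Your Step~2 is correct when the input already has $\ell_p$-diameter $O_p(\Delta)$: apply the Mazur map and then the standard $(O(\sqrt t),n^{-1/t})$-capped Euclidean carving at scale $\Theta_p(\Delta)$. The localisation that Step~1 asks for is a different matter. It must produce pieces of diameter $C_p\Delta$ that keep every pair at distance $\Delta/\tau$, with $\tau=O_p(\sqrt t)$, together with probability $n^{-O(1/t)}$. By definition that is an $(O_p(\sqrt t),n^{-O(1/t)})$-capped decomposition at scale $C_p\Delta$, i.e., the theorem itself, so the argument is circular. The version with probability $\Omega(1)$ is simply false for constant $t$; this already fails on the Hamming cube, by the LSH lower bound of \cite{MNP06}.

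The single randomly shifted Mazur map does not rescue this. Apply your pointwise bound to the shifted points $u=x-a$, $v=y-a$; after H\"older it gives $\|M(u)-M(v)\|_2\le O_p(1)\,\|u-v\|_p\,(\|u\|_p+\|v\|_p)^{p/2-1}$. This is $O_p(1)\|u-v\|_p$ only for points within distance $O(\Delta)$ of the shift. Take two points at distance $R\gg\Delta$ from the shift that differ slightly in a coordinate where both have size about $R$. Their distance is stretched by a factor $\Theta_p((R/\Delta)^{p/2-1})$. A shift drawn from a bounded box cannot fix this for a set of diameter much larger than $\Delta$. Your consistency worry, on the other hand, is not an issue: after localising, each piece can be recentred and carved independently.

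What a self-contained proof needs is a bootstrap from a weaker localisation.
\begin{itemize}
\item Localise at a coarser scale $R\Delta$ using some $(\tau_0,n^{-1/t})$-capped decomposition. Every $n$-point metric has one with $\tau_0=O(t)$ and no dependence on dimension, so the obstacle is the parameter, not the dimension.
\item Apply \Cref{thm:mazur} with $C_0=R\Delta$. Euclidean carving at scale $\Theta_p(R^{1-p/2}\Delta)$ then yields clusters of $\ell_p$-diameter at most $\Delta$.
\item A pair at $\ell_p$-distance at most $\Delta/\tau$ survives both steps with probability at least $n^{-2/t}$, provided $\tau\ge\max\{\tau_0/R,\ \Theta_p(\sqrt t\,R^{p/2-1})\}$.
\item Optimising $R$ gives $\tau=\Theta_p(t^{1/p}\tau_0^{1-2/p})$. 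So one round starting from $\tau_0=O(t)$ only reaches $O_p(t^{1-1/p})$.
\end{itemize}
The map $\tau_0\mapsto t^{1/p}\tau_0^{1-2/p}$ has fixed point $\sqrt t$, and iterating it comes within a constant factor of $\sqrt t$ after $O_p(\log\log t)$ rounds. To control the success probability, use parameter $2^jt$ in the $j$-th round counted from the last, including the initial general-metric decomposition. This keeps the product of the success probabilities at $n^{-O(1/t)}$ while $\tau$ stays $O_p(\sqrt t)$.

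Some recursion of this kind, rather than one localisation plus one Mazur map, is the missing ingredient. As written, your argument only covers point sets of diameter $O_p(\Delta)$.
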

The second tool is the classical embedding commonly known as the Mazur map. It acts coordinatewise, raising the absolute value of each coordinate to the power $p/2$ while preserving its sign. We use a property of a shifted and scaled version of this map; see~\cite[Chapter 9]{benyamini1998geometric}. The formulation below is taken from~\cite[Corollary 2.3]{KP25}.
\begin{theorem}
\label{thm:mazur}
Let $2<p<\infty$.
Every set \(X \subset \ell_{p}\) with diameter at most \( C_0>0\)
admits an embedding \(M_X: X \to \ell_{2}\) such that
\[
  \forall x,y \in X,
  \qquad
  \tfrac{2}{p} (2 C_0)^{1 - p/2} \|x - y\|_{p}^{p/2}
  \leq \|M_X(x) - M_X(y)\|_{2}
  \leq \|x - y\|_{p} .
\]
\end{theorem}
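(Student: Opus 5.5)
The plan is to take $M_X$ to be a translated and rescaled Mazur map. Fix any $x_0\in X$ and put $r=p/2>1$. Let $\phi(a)=\operatorname{sign}(a)|a|^{r}$ act coordinatewise, and write $M(z)=(\phi(z_i))_i$. Define
\[
M_X(x)=\lambda\, M(x-x_0),\qquad \lambda=\tfrac{2}{p}\,C_0^{1-p/2}.
\]
After translating, every point $z=x-x_0$ with $x\in X$ satisfies $\|z\|_p\le C_0$, by the diameter assumption. By convexity of the norm ball, the same bound then holds on the whole convex hull of the translated set. Translation preserves all distances, so it is enough to prove two bounds for translated points $u,v$ with $\|u\|_p,\|v\|_p\le C_0$:
\[
\|M(u)-M(v)\|_2\le r C_0^{r-1}\|u-v\|_p
\qquad\text{and}\qquad
\|M(u)-M(v)\|_2\ge 2^{1-r}\|u-v\|_p^{r}.
\]
Multiplying both by $\lambda$ gives the stated upper bound $\|x-y\|_p$ and the stated lower bound $\lambda 2^{1-p/2}\|x-y\|_p^{p/2}=\tfrac2p(2C_0)^{1-p/2}\|x-y\|_p^{p/2}$. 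It is essential to use exactly these constants: the lower bound must carry the factor $2^{1-r}$ and nothing smaller, and the Lipschitz constant must be $rC_0^{r-1}$ and nothing larger.

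\emph{Lower bound.} I will first prove the scalar inequality $|\phi(a)-\phi(b)|\ge 2^{1-r}|a-b|^r$ for all reals $a,b$. If $a,b$ have the same sign, say $0\le b\le a$, then superadditivity of $s\mapsto s^r$ gives $a^r-b^r\ge (a-b)^r$, which is stronger. If they have opposite signs, then $|\phi(a)-\phi(b)|=|a|^r+|b|^r\ge 2^{1-r}(|a|+|b|)^r$ by convexity of $s\mapsto s^r$ (power mean), and $|a|+|b|=|a-b|$. Squaring and summing over coordinates gives
\[
\|M(u)-M(v)\|_2^2\ge 2^{2-2r}\sum_i|u_i-v_i|^{p}=2^{2-p}\|u-v\|_p^p,
\]
and taking square roots gives the claim. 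This step needs no bound on the norms of $u,v$.

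\emph{Upper bound, the delicate step.} A crude coordinatewise mean-value bound uses $\max(|u_i|,|v_i|)^{r-1}$. After applying Hölder it yields a Lipschitz constant of $r(2C_0^p)^{(p-2)/(2p)}$, which exceeds $rC_0^{r-1}$ by the factor $2^{1/2-1/p}$ and spoils the constant. To avoid this I will integrate along the segment instead. Put $z(s)=v+s(u-v)$ for $s\in[0,1]$; then $\|z(s)\|_p\le C_0$ by convexity. The map $M$ is $C^1$ because $r>1$, and its derivative at $z$ is diagonal with entries $r|z_i|^{r-1}$. For a direction $w$, Hölder with exponents $\tfrac{p}{p-2}$ and $\tfrac p2$ gives
\[
\|DM(z)w\|_2^2=r^2\sum_i|z_i|^{p-2}w_i^2\le r^2\|z\|_p^{p-2}\|w\|_p^2 .
\]
Hence $\|DM(z)w\|_2\le rC_0^{r-1}\|w\|_p$ along the whole segment. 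Integrating, $\|M(u)-M(v)\|_2\le\int_0^1\|DM(z(s))(u-v)\|_2\,ds\le rC_0^{r-1}\|u-v\|_p$. With $\lambda=\tfrac2p C_0^{1-p/2}$ this is exactly $\|M_X(x)-M_X(y)\|_2\le\|x-y\|_p$.

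I expect this segment/derivative argument to be the main obstacle, since it is the only place where the constant is tight. The pointwise bound $\|z\|_p\le C_0$ along the segment is what removes the spurious factor of $2$. If some coordinate vanishes, so that $|z_i|^{r-1}$ is not smooth there, this causes no trouble: $r-1>0$ makes the derivative continuous, so the fundamental theorem of calculus still applies.
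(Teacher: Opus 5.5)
Your proof is correct, and it uses the same construction the paper invokes. The paper does not prove this statement; it cites it as a property of a shifted and scaled Mazur map, and your map $\lambda M(x-x_0)$ is exactly $\tfrac{2}{p}C_0\,M\bigl((x-x_0)/C_0\bigr)$. What you add is a self-contained verification of the two classical estimates with the precise constants $2^{1-p/2}$ and $p/2$. Integrating the diagonal derivative along the segment, rather than using a pointwise mean-value bound, is indeed what yields the Lipschitz constant $p/2$ on the unit ball without the extra factor $2^{1/2-1/p}$.
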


The reduction follows the spanner construction from~\cite{HIS13,FN22}, which reduces the problem to constructing spanners for sets of bounded diameter and then maps each such set to $\ell_2$ using the Mazur map.
For each image, we construct a Euclidean spanner and pull its edges back to $\ell_p$. The guarantees in \Cref{thm:mazur} are nonlinear, so obtaining meaningful bounds in the last step requires a hop bound for the Euclidean spanner; this is where we use \Cref{thm:hop-reduction}.

\begin{proof}[Proof of \Cref{thm:lp-from-l2}]
Rescale $X$ so that its minimum distance is $1$.
For every $1\leq i\leq O(\log n)$, sample $L=O(n^{1/(2t)}\log n)$ independent partitions from an $(O_p(\sqrt{t}),O_p(\sqrt{t})2^i,n^{-1/(2t)})$-capped decomposition given by \Cref{thm:lp-capped}. With high probability, every pair $x,y\in X$ with $\|x-y\|_p\leq2^i$ is clustered together in at least one of these partitions. For every cluster $C$, apply the map $M_C:C\to\ell_2$ given by \Cref{thm:mazur} and compute an $s(2t)$-spanner $G_C$ with $O(|C|^{1+1/(2t)})\leq O(n^{1/(2t)}|C|)$ edges for $M_C(C)$. We then use \Cref{thm:hop-reduction} to convert $G_C$ into an $O(s(2t))$-hop $O(s(2t))$-spanner $\Tilde{G}_C$ of size $\Tilde{O}(|C|^{1+1/(2t)})\leq\Tilde{O}(n^{1/(2t)}|C|)$ and add the corresponding edges of $\Tilde{G}_C$ between the original points of $C$.
For every partition, the cluster sizes sum to $n$, so summing over all levels and sampled partitions gives at most
\[
 O\bigl((\log n)Ln(n^{1/(2t)}+\log n)\bigr)
 =\Tilde{O}\bigl(n^{1+1/t}\bigr)
\]
edges.

It remains to bound the stretch. Fix $x,y\in X$, and let $i$ be the minimal integer such that $\|x-y\|_p\leq2^i$. With high probability, both $x$ and $y$ are contained in a cluster $C$ induced by one of the partitions sampled from the $(O_p(\sqrt{t}),O_p(\sqrt{t})2^i,n^{-1/(2t)})$-capped decomposition. Let $\Delta_i=O_p(\sqrt{t}\,2^i)$ denote the diameter bound of $C$. By the minimality of $i$, $\Delta_i=O_p(\sqrt{t}\,\|x-y\|_p)$.
Let $M_C(v_0)\to \ldots \to M_C(v_l)$ be the $O(s(2t))$-hop path between $M_C(x)$ and $M_C(y)$ in the Euclidean $O(s(2t))$-spanner, where $v_0=x$, $v_l=y$, and $l=O(s(2t))$ and consider the corresponding path $x=v_0 \to \dots \to v_l=y$ in the spanner. By \Cref{thm:mazur} and the Euclidean stretch,

\begin{align*}
 \frac{2}{p}(2\Delta_i)^{1-p/2}
 \sum_{j=1}^{l}\|v_j-v_{j-1}\|_p^{p/2}
 &\leq \sum_{j=1}^{l}
 \|M_C(v_j)-M_C(v_{j-1})\|_2\\
 &\leq O(s(2t))\|M_C(x)-M_C(y)\|_2\\
 &\leq O(s(2t))\|x-y\|_p,
\end{align*}
rearranging gives
\begin{align}\label{eq:mazur-path-bound}
    \sum_{j=1}^{l}\|v_j-v_{j-1}\|_p^{p/2} \leq O_p(\Delta_i^{p/2-1} s(2t) \|x-y\|_p).
\end{align}
Therefore, 
\begin{align*}
 \sum_{j=1}^{l}\|v_j-v_{j-1}\|_p
 &\leq l^{1-2/p}
 \left(\sum_{j=1}^{l}\|v_j-v_{j-1}\|_p^{p/2}\right)^{2/p}
 && \text{by H\"older's inequality,}\\
 &\leq O_p\left(
 l^{1-2/p}\Delta_i^{1-2/p}s(2t)^{2/p}
 \|x-y\|_p^{2/p}
 \right)
 && \text{by \eqref{eq:mazur-path-bound},} \\
 &\leq O_p\left(
 \Delta_i^{1-2/p}s(2t)\|x-y\|_p^{2/p}
 \right)
 && \text{since $l=O(s(2t))$,}\\
 &\leq O_p\left(
 \left(\frac{\Delta_i}{\|x-y\|_p}\right)^{1-2/p}s(2t)
 \right)\|x-y\|_p\\
 &\leq O_p\left(t^{1/2-1/p}s(2t)\right)\|x-y\|_p
 && \text{since $\Delta_i=O_p(\sqrt{t}\,\|x-y\|_p)$,}
\end{align*}
which proves the theorem.
\end{proof}

We remark that the assumption that $X$ has aspect ratio $\operatorname{poly}(n)$ can be removed using hierarchical nets as in~\cite{HIS13}. Additionally, if the $\ell_2$ spanners in \Cref{thm:lp-from-l2} are hop-bounded, their hop bound can be used instead of the hop bound supplied by \Cref{thm:hop-reduction}. In particular, using the $O(1)$-hop $\ell_2$ spanners of~\cite{HIS13} yields bounds similar to those of~\cite{KPS25}; we omit the details.

\paragraph*{Acknowledgments.}
An initial version of the proof of \cref{thm:three-hop-hypercube} was provided by ChatGPT 5.6-sol; the authors subsequently simplified it before its inclusion in the paper. ChatGPT 5.6-sol also assisted with writing and auditing the paper and helped simplify the proof of \cref{thm:hop-reduction} and remove polylogarithmic factors from its size bound. The authors verified the resulting text and proofs and assume responsibility for all content.

{\small
  \bibliographystyle{alphaurl}
  \bibliography{references}
} %

\appendix
\section{A Near-Linear $3$-Hop Spanner for the Hypercube}
\label{app:three-hop-hypercube}

In this appendix, we prove that the full Hamming cube has a near-linear $3$-hop spanner with constant stretch; in particular, the $2$-hop lower bound from \cref{thm:intro-two-hop} no longer holds when one additional hop is allowed.

\begin{theorem}
\label{thm:three-hop-hypercube}
For every $d\geq1$, the $d$-dimensional Hamming cube $(\HH,\|\cdot\|_1)$ has a $3$-hop $128$-spanner with $O\bigl(2^d\log d\bigr)$ edges.
\end{theorem}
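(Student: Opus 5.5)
We would build the spanner scale by scale, modelled on the proof of \Cref{thm:hop-reduction-aspect-ratio}. The extra hop is used to replace the multi-hop spanner path on a net by a single ``hub-to-hub'' edge.

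For each scale $r=2^j$ with $0\leq j\leq\lceil\log_2 d\rceil$, we aim for a set of hubs $N_r\subseteq\HH$ and a map $\pi_r:\HH\to N_r$ with $\|x-\pi_r(x)\|_1\leq c\,r$ for a constant $c$. For every $x$ and every $r$ we add the edge $\{x,\pi_r(x)\}$. This contributes $2^d$ edges per scale, and hence $O(2^d\log d)$ edges in total, which is exactly the budget in \Cref{thm:three-hop-hypercube}. We also add, between hubs of the same scale, every edge $\{a,b\}$ with $a,b\in N_r$ and $\|a-b\|_1\leq (2c+2)r$.

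For a pair $x,y$ with $\|x-y\|_1\in[r,2r)$, the path $x\to\pi_r(x)\to\pi_r(y)\to y$ has $3$ hops. Its length is at most $(4c+2)r\leq(4c+2)\|x-y\|_1$, and any $c\leq 31$ keeps this below $128$. Stretch and hop count are therefore immediate, and the whole difficulty is in the size of the middle layer.

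The main obstacle is counting the hub-to-hub edges. If $N_r$ is an arbitrary $r$-net (for instance a covering code of radius $r$), each hub has roughly $\binom{d}{O(r)}/\binom{d}{r}\approx (d/r)^{\Theta(r)}$ neighbouring hubs. This is far too many for a near-linear bound. So the hubs must be chosen with product structure, so that ``nearby hub'' is a much rarer relation than ``nearby point''. The first choice we would try is a coordinate-fixing scheme. Partition $[d]$ into blocks, and let $\pi_r(x)$ be obtained from $x$ by zeroing the coordinates of a scale-dependent set $T_r$ of size $\Theta(r)$. Choosing $T_r$ per block lets the two endpoints use different zeroed sets. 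The middle edge then only has to join hubs that differ on a set of coordinates determined by the pair, and we would try to arrange that each hub has $O(2^{|T_r|})$-ish middle edges, which is balanced against $|N_r|=2^{d-|T_r|}$.

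If this direct scheme fails to keep the middle layer linear, the fallback is to keep the constant-hop structure but let the middle step go through a single intermediate hub of a coarser scale $4r$. In that version the degree of the coarse hubs is controlled by the $2^{\Theta(r)}$ factor saved in $|N_{4r}|$, and the generous stretch constant $128$ is spent absorbing the extra distance. In either version, the key lemma to establish first is a per-scale bound of $O(2^d)$ on the hub-to-hub edges. Once that holds, summing over the $O(\log d)$ scales gives the claimed size.
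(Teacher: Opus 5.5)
There is a genuine gap. Reducing to a per-scale $O(2^d)$ bound on hub-to-hub edges is fine, but that bound is the whole content of the theorem, the proposal does not prove it, and the constructions you suggest cannot deliver it. Coordinate-fixing already fails at $r=1$. If $\pi_1$ zeroes a fixed set $T_1$ with $|T_1|=O(1)$, then two points differing in one coordinate outside $T_1$ have hubs differing in exactly that coordinate. So your middle layer contains the entire cube on the $d-|T_1|$ free coordinates, which is $\Theta(d\,2^d)$ edges at a single scale. More generally, zeroing $\Theta(r)$ coordinates shrinks $|N_r|$ only by a factor $2^{\Theta(r)}$. Meanwhile each hub must be joined to $\binom{d-|T_r|}{\Theta(r)}=(d/r)^{\Theta(r)}$ other hubs. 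Balancing the two forces $|T_r|=\Theta(r\log(d/r))$, and hence stretch $\Theta(\log(d/r))$. Choosing $T_r$ per block does not change this count unless $T_r$ depends on $x$, and for that case you give no construction. The fallback has the same $2^{\Theta(r)}$ versus $(d/r)^{\Theta(r)}$ mismatch. Moreover, inserting a coarse hub between $\pi_r(x)$ and $\pi_r(y)$ costs a fourth hop.

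Two ingredients are missing. First, each hub must absorb $(d/r)^{\Theta(r)}$ points within radius $O(r)$, so you need an essentially perfect covering code, not a coordinate projection. Second, your threshold $(2c+2)r$ exceeds twice the covering radius $cr$. For the efficient codes one would naturally use (random covering codes, or the product Hamming codes below), the number of hub pairs within that distance is $2^d(d/r)^{\Omega(r)}$. The threshold must instead be a small constant fraction of the covering radius, and that requires the rounding map itself to be Lipschitz.

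The paper gets both ingredients from products of perfect Hamming codes. It splits $[d]$ into $k_L=d/L$ blocks of length $L=2^\ell-1$ and rounds each block to its Hamming codeword. Cells then have size $(L+1)^{k_L}$ and radius $k_L$. The rounding is Lipschitz blockwise: if two points differ in a block by $s\geq1$, their codewords differ by at most $s+2\leq 3s$, so $\|\pi_L(u)-\pi_L(v)\|_1\leq 3\|u-v\|_1$. Level $L$ is used only for pairs with $k_{2L+1}/8<\|u-v\|_1\leq k_L/8$. Centers therefore need only be joined within distance $3k_L/8$. Each center has at most $2^{k_L}\cdot 2^{3k_L/8}\cdot L^{3k_L/8}\leq (L+1)^{k_L}$ such neighbors, giving at most $2^d$ center edges per level. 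Pairs at distance above $d/64$ are routed through a star at $0^d$. The constant $128$ pays for the covering radius being up to $32$ times the pair distance, not for an extra intermediate hub.
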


The construction uses the classical Hamming code~\cite{Hamming1950}. Specifically, it uses the fact that for every $L=2^\ell-1$, there is a binary Hamming code $H_L\subseteq\{0,1\}^L$ of minimum distance $3$ and size $2^L/(L+1)$.

\begin{proof}
We start by explaining the construction.
For every level $3 \leq \ell \leq O(\log d)$, let $L=2^\ell-1 \geq 7$ and divide the coordinates into $k_L= d/L $ blocks of length $L$. For simplicity, when treating a particular length $L$, assume that $L$ divides $d$; we remove this assumption at the end.
The radius-$1$ balls around the codewords of $H_L$ partition $\{0,1\}^L$ into $m=2^L/(L+1)$ parts. Apply this partition in every block, thereby partitioning $\HH$ into $m^{k_L}$ cells. We call the product of $k_L$ codewords defining each cell the center of the cell, and for every $x \in \HH$, let $\pi_L(x)$ denote the center of the cell containing $x$.
Add every edge $\{x,\pi_L(x)\}$ and connect two distinct centers whenever their distance is at most $3k_L/8$. Finally, connect all points to $0^d$ via a direct edge.

Next, we bound the number of edges.
Fix a center $x$, and consider a neighboring center. Suppose that it differs from $x$ in $r$ blocks, and that the codeword distances in those blocks are the integers $t_1,\ldots,t_r$, where $t_1 + \cdots + t_r \leq 3k_L/8$.
There are at most $2^{k_L}$ ways to choose the $r$ differing blocks, and for a fixed choice of blocks, there are at most $2^{3k_L/8}$ possible vectors $(t_1,\ldots,t_r)$. In each block with codeword distance $t_i \geq 3$, there are at most $\binom L{t_i}\leq L^{t_i}$ choices for the new codeword.
Hence the number of centers realizing these distances is at most $\prod_{i=1}^r L^{t_i}=L^{t_1+\cdots+t_r}\leq L^{3k_L/8}$. Combining the three choices, every center has at most
\[
    2^{k_L+3k_L/8}L^{3k_L/8}
    =\bigl(2^{11/8}L^{3/8}\bigr)^{k_L}
    \leq(L+1)^{k_L}
\]
neighbors, where the last inequality uses $L\geq7$. Since there are only $m^{k_L}=2^d/(L+1)^{k_L}$ centers, there are at most $2^d$ center-center edges for each level. Connecting all points to their centers and to $0^d$ adds $O(2^d)$ edges. Since there are only $O(\log d)$ levels, the spanner has $O(2^d\log d)$ edges.

It remains to bound the stretch. Fix $u,v\in\HH$. If $\|u-v\|_1>d/64$, the path through $0^d$ has at most two edges and length at most $2d<128\|u-v\|_1$. Otherwise, choose the largest length $L=2^\ell-1$, for which $k_L/8\geq\|u-v\|_1$. Such a length exists because $L=7$ gives $k_L/8=d/56\geq d/64$. The next Hamming code length is $2L+1$, so maximality and $2L+1<4L$ give
\[
    k_L<4k_{2L+1}<32\|u-v\|_1.
\]
If $\pi_L(u)=\pi_L(v)$, we immediately get a path of length at most $2k_L<128\|u-v\|_1$. Otherwise, the edge $\{\pi_L(u),\pi_L(v)\}$ was added to the spanner because the cells containing $u$ and $v$ are at distance at most $\|u-v\|_1\leq k_L/8$ and the distance between the centers of two cells is at most three times the distance between the cells. Therefore, $u \longrightarrow \pi_L(u) \longrightarrow \pi_L(v) \longrightarrow v$ is a $3$-hop path in the spanner of length at most
\[
    2k_L+3\|u-v\|_1<128\|u-v\|_1.
\]

Finally, to remove the assumption that $L$ divides $d$, write $d=k_L L+q$, where $k_L=\lfloor d/L\rfloor$ and $0\leq q<L$. We apply the partition only to the first $k_L L$ coordinates. Every $q$-bit suffix is copied unchanged to the center. The number of centers is $m^{k_L}2^q=\frac{2^d}{(L+1)^{k_L}}$, specifically, the factor $2^q$ causes no additional loss in the center count. In the center-center edge count, if two suffixes are at distance $j$, they contribute $\binom qj\leq L^j$ choices, while the number of distance vectors increases from at most $2^{3k_L/8}$ to at most $2^{3k_L/8+1}$. Hence the edge bound changes only by a factor of $2$. Since the suffix is copied unchanged and $k_L$ differs from $d/L$ by less than one, the stretch estimates are unchanged.
\end{proof}

\section{The Greedy Spanner in High Dimension}
\label{app:greedy-spanner}

In this appendix, we show that, in the high-dimensional setting, the classical greedy spanner algorithm~\cite{ADDJS93} may produce spanners that are far from optimal. 
To construct a $t$-spanner $H$ for a weighted graph $G=(V, E, w)$, the greedy spanner algorithm starts with an empty graph, considers every edge of $G$ in nondecreasing order of weights, and adds an edge $\{u,v\}$ whenever the current distance in $H$ between $u$ and $v$ is greater than $t \cdot w(\{u,v\})$. Edges with equal weights may be ordered arbitrarily.

Fix $t>1$ and $n\geq2$, and let $G_{n,t}$ be an $n$-vertex graph with the largest possible number of edges among graphs of girth greater than $\lfloor t\rfloor+1$. For $1\leq p\leq\infty$, consider the standard basis vectors $e_1,\ldots,e_n\in\ell_p^n$, and identify vertex $i$ of $G_{n,t}$ with $e_i$.

\begin{theorem}
\label{thm:greedy-not-optimal}
For every $1\leq p\leq\infty$, there is an ordering for which the greedy $t$-spanner outputs a spanner with at least $|E(G_{n,t})|$ edges.
\end{theorem}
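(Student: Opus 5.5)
We exhibit an ordering of the edges of the complete graph on $\{e_1,\ldots,e_n\}$ under which the greedy $t$-spanner keeps every edge of $G_{n,t}$. The key observation is that all pairwise distances are equal: $\|e_i-e_j\|_p = 2^{1/p}$ for $p<\infty$, and $=1$ for $p=\infty$. Call this common value $w$. Since all edges have the same weight, the greedy algorithm may process them in any order. We choose an order that lists the edges of $G_{n,t}$ first, in arbitrary order, followed by all remaining pairs.

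The main step is to show that every edge of $G_{n,t}$ is added. Suppose the greedy algorithm is considering an edge $\{i,j\}\in E(G_{n,t})$. At this point the current spanner $H$ is a subgraph of $G_{n,t}\setminus\{i,j\}$. Every edge of $H$ has weight $w$, so a path with $k$ edges has length $kw$. If $d_H(i,j)\le t w$, then $H$ contains an $i\to j$ path with at most $\lfloor t\rfloor$ edges. Closing this path with the edge $\{i,j\}$ gives a cycle in $G_{n,t}$ of length at most $\lfloor t\rfloor+1$. This contradicts the assumption that $G_{n,t}$ has girth greater than $\lfloor t\rfloor+1$. Hence $d_H(i,j)>tw$, and the edge is added.

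It follows that the output contains all of $E(G_{n,t})$. Edges added later only increase the count, so the output has at least $|E(G_{n,t})|$ edges.

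The only point needing care is the off-by-one between "length at most $tw$" and "at most $\lfloor t\rfloor$ hops." Since hop counts are integers, $kw\le tw$ is equivalent to $k\le\lfloor t\rfloor$, and this matches the girth condition exactly.

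Optionally, we can note why this matters: by the Moore-bound constructions, $|E(G_{n,t})|$ can be polynomially superlinear. Meanwhile the basis vectors, being an equilateral set, admit a $2$-spanner that is a star with $n-1$ edges. This shows the greedy output is far from optimal. This remark is not needed for the theorem as stated.
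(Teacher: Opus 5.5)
Your proof is correct and matches the paper's argument: since all pairwise distances are equal, you process the edges of $G_{n,t}$ first. Any existing short path would then have at most $\lfloor t\rfloor$ edges and, together with the current edge, would close a cycle of length at most $\lfloor t\rfloor+1$, contradicting the girth. Your closing remark about the star being a $2$-spanner with $n-1$ edges is also valid (for $t\geq 2$), though, as you note, it is not needed for the statement.
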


\begin{proof}
All distances between the standard basis vectors are equal, so every ordering of the pairs is valid. Consider an ordering that first examines the pairs corresponding to the edges of $G_{n,t}$ and then examines all remaining pairs.

When an edge $\{u,v\}\in E(G_{n,t})$ is examined, the current graph is a subgraph of $G_{n,t}$. If it already contained a path from $e_u$ to $e_v$ of length at most $t\|e_u-e_v\|_p$, then this path would use at most $\lfloor t\rfloor$ edges. Together with $\{u,v\}$, it would form a cycle in $G_{n,t}$ of length at most $\lfloor t\rfloor+1$, contradicting its girth. 
Thus, the greedy algorithm adds every edge of $G_{n,t}$.
\end{proof}

It is well known that $|E(G_{n,t})|\geq n^{1+\Omega(1/t)}$ for every fixed $t$ and all sufficiently large $n$; see, for example~\cite[Chapter 3]{AS00}. In contrast, the constructions from~\cite{HIS13,FN22, KPS25} give, for every fixed $1<p<\infty$, $O_p(t)$-spanners with $n^{1+o(1/t)}$ edges as $t\to\infty$. Thus, in high dimension, the greedy spanner can be polynomially denser than spanners produced by other algorithms.

\end{document}